\setlist[enumerate]{itemsep=0mm}
\begin{document}
 \title[Logic SPN]{Logic Signed Petri Net}


 \author[Payal]{Payal}
 \address{ Department of Applied Mathematics,
 	Delhi Technological University,
 Bawana Road,Delhi,India- 110042}
 \email{payal.dtu@gmail.com}
 \thanks{ACM Classification:F.1.1}

 \author[Sangita]{Sangita Kansal}
 \address{Department of Applied Mathematics,
 	Delhi Technological University,
 	Bawana Road,Delhi,India- 110042}
 \email{sangita\_kansal15@rediffmail.com}
\thanks{Keywords: Signed Petri net, Petri net, marking, reachability tree}

\maketitle
\begin{abstract}
	In this paper,the authors show the versatility of the Signed Petri Net (SPN) introduced by them by showing the equivalence between a Logic Signed Petri Net (LSPN) and Logic Petri Net (LPN).
	The capacity of each place in all these nets is at most one, i.e., a place has either zero or one token in it.
\end{abstract}

%


\maketitle

\section{Introduction}
Petri Net is a well-known modeling tool to model concurrent systems.Since its inception, various extensions of PN's have been introduced viz., Continuous PN, Stochastic PN, Timed PN, Object PN, Hybrid PN, Workflow nets, Fuzzy PN, Lending PN, Multidimensional PN \cite{All,Mar,Mur,Wan,Rud,Van,Che,Mass,Van1} etc.The PN and its extensions have been widely used in various fields \cite{HEL,DIF,LIU,CON}.To describe batch processing functions and passing value indeterminacy in cooperative systems, LPNs have been defined as high level PNs in \cite{YDu1}.The authors introduced the concept of Signed Petri net (SPN)\cite{Pay} by utilizing the concept of signed graph and Petri net.SPNs are capable of modeling a large variety of systems and is prefered over a Petri net due to the presence of two types of tokens in it, positive and negative, which are distinguishable.Further, in comparison to a signed graph, SPN is advantageous since a single SPN can be used to represent various signed graphs by simply varying the marking of SPN due to firing of a sequence of transitions.Thus, we need to analyse one SPN in order to infer about all possible signed graph structures that can be formed for a fixed number of vertices.Although an SPN was introduced keeping in view the above mentioned advantages,in this paper, authors show the versatility of the SPN by showing that an equivalent LPN can be constructed using an SPN, which the authors call a Logic Signed Petri Net (LSPN).
\section{Basic Definitions}
\subsection{Petri Net(\cite{Jen})}

A \textit{Petri net (PN)} is a 5-tuple $ N=(P,T,I^{-},I^{+},\mu_0)$,where

\begin{enumerate}
	\item 	P is the finite, non-empty set of places.
	\item 	T is the finite, non-empty set of transitions.
	\item 	$ P\cap T =\emptyset$.
	\item 	$ I^{-},I^{+} :(P\times T)\rightarrow \mathbb{N}$ where $\mathbb{N}$ is the set of non-negative integers, are called negative and positive incidence functions respectively.
	\item 	$\forall p \in P,\exists \ t \in T $ such that $I^{-}(p,t) \neq 0 \ or \ I^{+}(p,t) \neq 0$,and\\
	$\forall t \in T,\exists \ p \in P $ such that $I^{-}(p,t) \neq 0 \ or \ I^{+}(p,t) \neq 0$
	
	\item 	$\mu_0 :P\rightarrow \mathbb{N} $ is the initial marking which gives the initial distribution of tokens in places.
	
	The arc set of the Petri net $N$ is defined as:
	$$E=\{(p,t):I^{-}(p,t)>0\} \cup \{(t,p):I^{+}(p,t)>0\}$$

\end{enumerate}

\subsection{Signed Petri Net(\cite{Pay})}  
\label{S_1}

\begin{defi}{\textbf{Signed Petri Net (SPN)}}
	
	A Signed Petri Net is defined as a 3-tuple $N^{*}=(N',\sigma,\mu_0)$ ,where 
	\begin{enumerate}
		\item $N'=(P,T,I^{-},I^{+})$ is a Petri net structure. 
		\item $\sigma :E \rightarrow \{+,-\}$, where $E$ is the arc set of $N'$.An arc is called a positive or negative arc respectively according to the sign $+$ or  $-$ assigned to it using the function $\sigma$.
		\item  $\mu_0=(\mu^{+}_0,\mu^{-}_0)$ is the initial marking of SPN where
		\begin{enumerate}
			\item $\mu_0^{+}:P\rightarrow \mathbb{N}$ gives the initial distribution of positive tokens in the places,called positive marking of SPN.
			\item $\mu_0^{-}:P\rightarrow \mathbb{N}$  gives the initial distribution of negative tokens in the places,called negative marking of SPN.
		
	\end{enumerate}
	
\end{enumerate}

\end{defi}

Thus,a marking in SPN can be represented as a vector $\mu=(\mu^{+},\mu^{-})$ with $\mu^+,\mu^- \in \mathbb{N}^n,n=|P|$ such that $\mu(p_i)=(\mu^{+}(p_i),\mu^{-}(p_i)) \ \forall \ p_i \in P$.\\
\par Graphically, positive and negative arcs in an SPN are represented by solid and dotted lines respectively.A positive token is represented by a filled circle and a negative token by an open circle.\\
An SPN is said to be \textit{negative} if all of its arcs are negative in sign.

\begin{rem}
	$N''=(N',\sigma)$ is called an SPN structure where $N'$ is a PN structure and $\sigma :E \rightarrow \{+,-\}$.
\end{rem}

\subsubsection{Execution Rules for Signed Petri Net}
\label{prop}
Similar to a Petri net, the execution of an SPN depends on the distribution of tokens in its places.The execution takes place by firing of a transition.A transition may fire if it is enabled.\\
A transition $t$ in an SPN $N^{*}$ is \textit{enabled} at a marking $\mu=(\mu^+,\mu^-)$ if \\
$$
I^{-}(p,t) \leq \mu^{+}(p) \ \forall p \in {}^{\bullet}t \ \textnormal{for which} \ \sigma(p,t) = + $$
$$	 I^{-}(p,t) \leq \mu^{-}(p) \ \forall p \in {}^{\bullet}t \ \textnormal{for which} \ \sigma(p,t) = - 
$$
An enabled transition $t$ may \textit{fire} at $\mu=(\mu^+,\mu^-)$ provided $\exists p_k \in t^{\bullet}$ such that:
$$\sigma(t,p_k)=
\begin{cases}
+ & \text{if}\  \sigma(p,t)=+ \ \forall p \in {}^{\bullet}t \\
- & \text{if}\  \sigma(p,t)=- \ \forall p \in {}^{\bullet}t \\
+ \ \text{or} \ - & \text{if}\  \sigma(p,t)=+ \ \text{for some} \ p \in {}^{\bullet}t \ and \ - \ \text{for some} \ p \in {}^{\bullet}t \\
\end{cases}
$$
After firing, it yields a new marking $\mu_1=(\mu_1^+,\mu_1^-)$ given by the rule:
$$\mu_1^+(p)=\mu^+(p) - I^{-}(p,t) +  I^{+}(p,t) \ \forall p \in P \quad \textnormal{where (p,t) \& (t,p) are positive arcs,}$$ $$ \textnormal{\qquad \qquad \qquad if exist} $$
$$\mu_1^-(p)=\mu^-(p) - I^{-}(p,t) +  I^{+}(p,t) \ \forall p \in P \quad \textnormal{where (p,t) \& (t,p) are negative arcs,}$$ $$ \textnormal{\qquad \qquad \qquad if exist}$$
We say that $\mu_1$ is reachable from $\mu$ and write $\mu \stackrel{t}\to \mu_1$.We restrict the movement of positive(negative) tokens to positive(negative) arcs only.

\par
A marking $\mu$ is reachable from $\mu_0$ if there exists a firing sequence $\eta$ that transforms $\mu_0$ to $\mu$ and is written $\mu_0 \stackrel{\eta}\to \mu$.A \textit{firing or occurence sequence} is a sequence of transitions $\eta=t_1t_2\ldots t_k$ such that 
$$\mu_0 \stackrel{t_1}\to \mu_1\stackrel{t_2}\to \mu_2 \stackrel{t_3} \to \mu_3 \ldots \stackrel{t_k}\to \mu$$
Note that a transition $t_j,1 \leq j \leq k$ can occur more than once in the firing sequence $\eta$.\\
Let us look at the execution of an SPN with the help of an example.\\
In figure \ref{SPN13}$(a)$, \ $t_1$ and $t_2$ both are enabled at $\mu_0$. Firing of $t_1$ yields a new marking $\mu=((0,1,1,0),$$(1,0,1,0))$ and firing of $t_2$ yields a new marking $\mu=((1,0,2,0),(0,1,0,1))$.
In figure \ref{SPN13}$(b)$, \ $t_1$ is enabled,while $t_2$ is not. $t_1 $ can fire to give a new marking
$\mu=((0,0,1,0),(0,0,0,1))$.\\
\begin{figure}[ht]
	\centering
	\subfloat[SPN with $\mu_0=((1,0,1,0),(1,0,0,0))$ ]{{\includegraphics[scale=0.35]{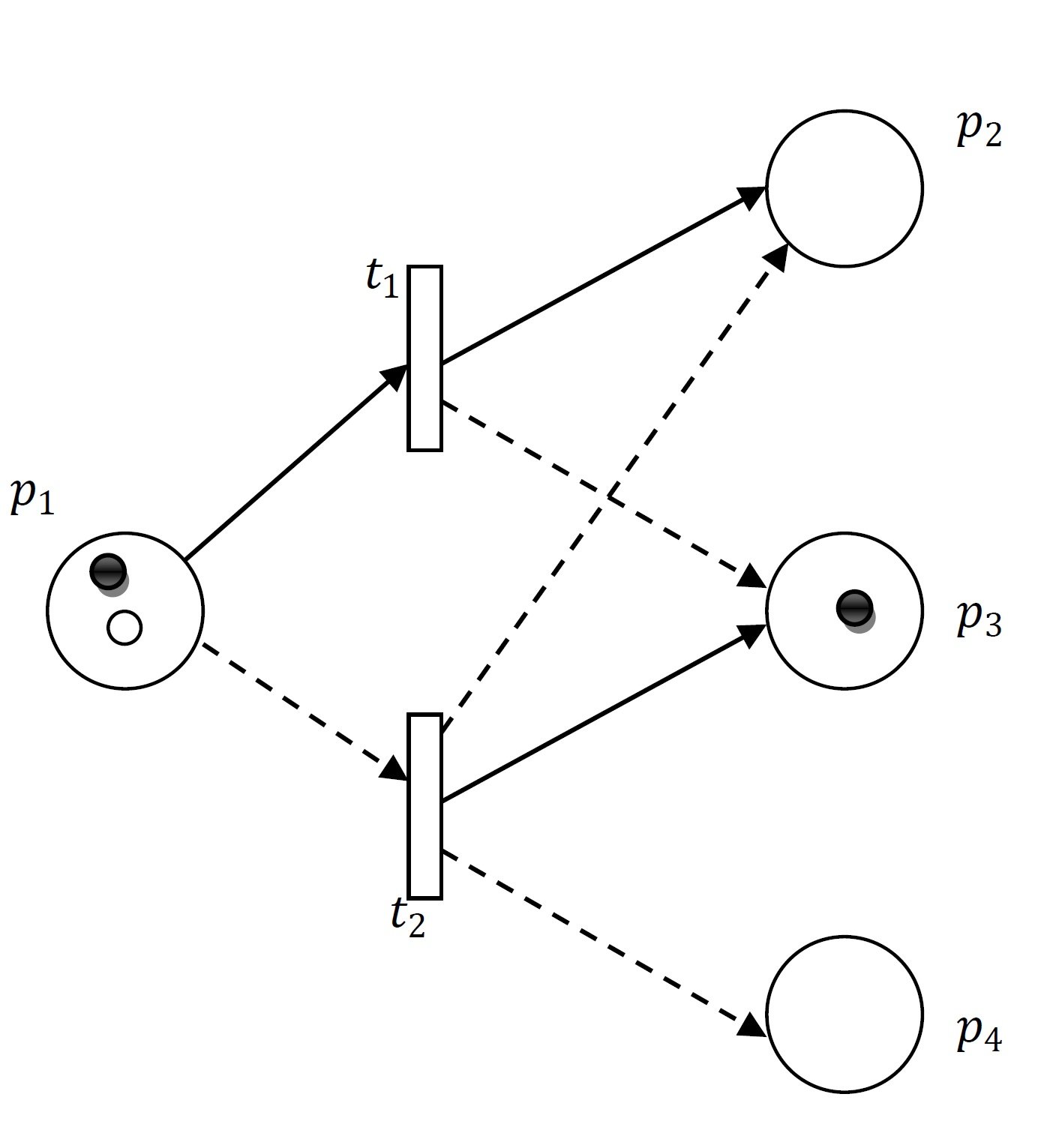} }}
	\qquad \qquad \qquad
	\subfloat[SPN with $\mu_0=((1,0,0,0),(0,0,0,0))$]{{\includegraphics[scale=0.3]{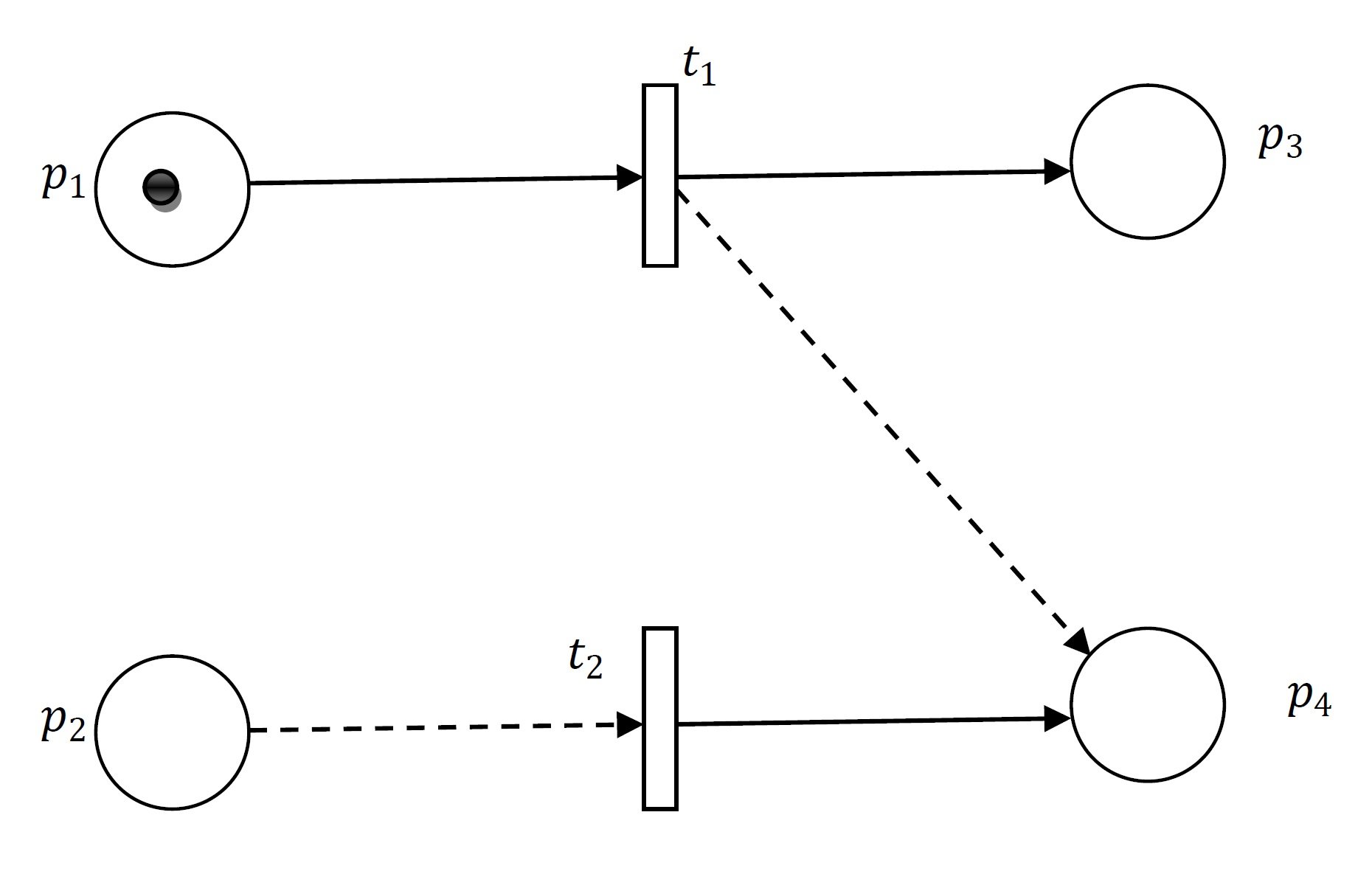} }}
	\caption{ {Execution of an SPN}}
	\label{SPN13}
\end{figure}

\begin{defi}{\textbf{Reachability Set of Signed Petri net}}
	
	The Reachability Set $R(N^{*},\mu_0)$ of an SPN $N^{*}$ is the set of all markings of $N^*$ reachable from the initial marking $\mu_0$.
\end{defi}

\begin{defi}{\textbf{Logic Petri net}}(\cite{YDu1})

	$LPN=(P,T,I^-,I^+,I,O,\mu_0)$ is a logic PN if and only if 
	\begin{enumerate}
		\item $(P,T,I^-,I^+,\mu_0)$ is a PN.
		\item T includes three subsets of transitions,i.e.,$T=T_D\cup T_I \cup T_O , \ \forall \ t \in T_I \cup T_O :{}^\bullet t \cap t^\bullet = \emptyset$ where
		\begin{itemize}
			\item $T_D$ denotes a set of traditional transitions.
			\item $T_I$ denotes a set of logic input transitions.
			\item $T_O$ denotes a set of logic output transitions.
		\end{itemize}
		\item $I$ is a mapping from logic input transitions to a logic input expression,i.e.,$\forall \ t \in T_I,I(t)=f_I(t)$, where $f_I(t)$ is a logic input expression associated with the transition $t$.
		\item $O$ is a mapping from logic output transitions to a logic output expression,i.e.,$\forall \ t \in T_O,O(t)=f_O(t)$, where $f_O(t)$ is a logic output expression associated with the transition $t$.
		\item \textit{Transition Firing Rules}
		\begin{itemize}
			\item $\forall t \in T_D$,the firing rules are same as in PNs.
			\item $\forall t \in T_I$, $t$ is enabled at $\mu$ if the input expression $f_I(t)$ is true at $\mu$.After firing it yields a new marking $\mu'$ given by the following rule:\\
			$\forall p \in {}^\bullet t,$ if $ \mu(p)=1,$ then $\mu'(p)=\mu(p)-1;$\\
			$\forall p \in t^\bullet ,\mu'(p)=\mu(p)+1$\\
			$\forall p \notin {}^\bullet t \cup t^\bullet,\mu'(p)=\mu(p)$
			\item $\forall t \in T_O$,t is enabled if $\forall p \in {}^\bullet t:\mu(p)=1.$ Firing $t$ will generate a new marking $\mu',$ given by: 
			$\forall p \in S: \mu'(p)=\mu(p)+1;$\\
			$\forall p \in {}^\bullet t ,\mu'(p)=\mu(p)-1$\\
			$\forall p \notin {}^\bullet t \cup t^\bullet$ or $ \forall p \in t^\bullet \backslash S: \mu'(p)=\mu(p)$	
			where $ S\subseteq t^\bullet,$ such that output expression $f_O(t)$ is true at $\mu'$.
		\end{itemize}

	\end{enumerate}

\end{defi}

\begin{rem}
	In the definition of Logic Petri Net, a logic input expresssion is attached to a logic input transition.By this we mean a logic expression involving the input places of the given transition.\\
	Similarly, a logic output expresssion is attached to a logic output transition.By this we mean a logic expression involving the output places of the given transition.

\end{rem}

\section{Logic Signed Petri Net(LSPN)}

The Logic Signed Petri Net (LSPN) has been obtained by modification of execution rules of an SPN as defined in \cite{Pay}.
\begin{defi}{\textbf{Logic Signed Petri net}}
	
	A logic SPN is defined as an SPN $N^*=(P,T,A^-,A^+,B^-,B^+,\mu_0)$ where 
	\begin{enumerate}[(i)]
		\item $P$ is a finite,non empty set of places.
		\item $T$ includes three subsets of transitions,i.e.,$T=T_D\cup T_I \cup T_O , \ \forall \ t \in T_I \cup T_O :{}^\bullet t \cap t^\bullet = \emptyset$ where
		\begin{itemize}
			\item $T_D$ denotes a set of traditional transitions.
			\item $T_I$ denotes a set of logic input transitions.
			\item $T_O$ denotes a set of logic output transitions.
		\end{itemize}
		If $t \in T_I$ then $\forall \ p \ \in {}^\bullet t $ either $\sigma(p,t)=+$ or $\sigma(p,t)=+$ as well as $-$ and $\forall \ p \ \in t^\bullet$ $\sigma(t,p)=+$.\\
		Similarly, if $t \in T_O$ then $\forall \ p \ \in t^\bullet  $ either $\sigma(t,p)=+$ or $\sigma(t,p)=+$ as well as $-$ and $\forall \ p \ \in {}^\bullet t $ $\sigma(p,t)=+$.
		\item $A^- =[a^-_{ij}]$ where $a^-_{ij}$ gives the number of positive arcs from $p_j$ to $ t_i$.
		\item $A^+ =[a^+_{ij}]$ where $a^+_{ij}$ gives the number of positive arcs from $ t_i$ to $p_j$.
		\item $B^- =[b^-_{ij}]$ where $b^-_{ij}$ gives the number of negative arcs from $p_j$ to $t_i$.
		\item $B^+ =[b^+_{ij}]$ where $b^+_{ij}$ gives the number of negative arcs from $ t_i$ to $p_j$.
		\item A marking in an LSPN can be represented as a vector $\mu=(\mu^{+},\mu^{-})$ with $\mu^+,\mu^- \in \mathbb{N}^n,n=|P|$ such that $\mu(p_i)=(\mu^{+}(p_i),\mu^{-}(p_i)) \ \forall \ p_i \in P$.\\
		
		\subsection{Execution rules for LSPN}
		\begin{enumerate}[(i)]
			\item If $t_i \in T_D$, then the execution rules are same as for an SPN.
			\item For $t_i \in T_I$ 
			\begin{itemize}
				\item \textit{Enabling Condition}- A transition $t_i\in T_I$ is enabled at $\mu$ provided
				\begin{enumerate}
					\item $\mu^+(p_j)=1 \ \forall \ p_j \ \in \ S_1=\{ p_j  \in {}^\bullet t_i \ | \ a_{ij}^-=1,b_{ij}^-=0 \} $ and
					\item $ \forall p \ \in \ S_2=\{ p_j  \in {}^\bullet t_i \ | \ a_{ij}^-=1,b_{ij}^-=1 \} $, $\exists p_k \in S_2$ with $\mu^+(p_k)=1$ and $\forall  p_j \in S_2\backslash \{p_k\}, \quad$ either $\mu^+(p_j)=1$ or $\mu^-(p_j)=1$.
				\end{enumerate} 
				\item \textit{Firing Condition}- An enabled transition $t_i \in T_I$ may fire at a marking $\mu=(\mu^+,\mu^-)$ to yield a new marking $\mu_1$ given by :-
				\\
				\begin{itemize}
					\item $\forall \ p_j \ \in \ S_1,$ 
					$$\mu_1^+(p_j)=\mu^+(p_j)-a_{ij}^- +a_{ij}^+$$
					\item $\forall \ p_j \ \in \ S_2, $ for which
					$\mu^+(p_j)=0 \ \& \ \mu^-(p_j)=1$
					$$\mu_1^+(p_j) =\mu^+(p_j)$$
					$$\mu_1^-(p_j)=\mu^-(p_j)-b_{ij}^- +b_{ij}^+$$
					\item $\forall \ p_j \ \in \ S_2, $ for which $\mu^+(p_j)=1 \ \& \ \mu^-(p_j)=0$\\
					$$\mu_1^-(p_j) =\mu^-(p_j)$$
					$$\mu_1^+(p_j)=\mu^+(p_j)-a_{ij}^- +a_{ij}^+$$
					
					\item  For all $p_j \in P \backslash {}^\bullet t_i$
				\end{itemize}
			\end{itemize}
			$$\mu_1^+(p_j)=\mu_0^+(p_j)-a_{ij}^- +a_{ij}^+$$
			$$\mu_1^-(p_j)=\mu_0^-(p_j)-b_{ij}^- +b_{ij}^+$$
			
			\begin{figure}
				\centering
				\includegraphics[scale=0.35]{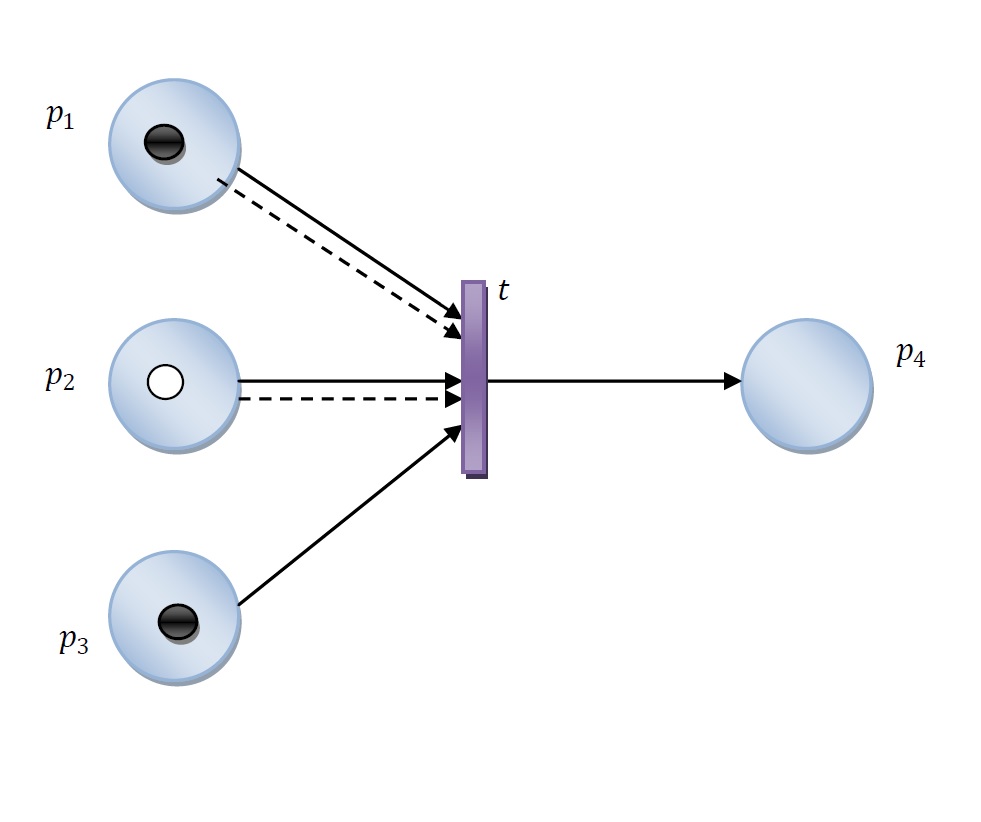}
				\caption{A LSPN with a logic input transition.Transition t is enabled at $\mu=((1,0,1,0),(0,1,0,0))$. t fires to yield a new marking $\mu_1 =((0,0,0,1),(0,0,0,0))$}
			\end{figure}
			\item For $t_i \in T_O$ 
			\begin{itemize}
				\item \textit{Enabling Condition}- A transition $t_i \ \in T_O$ is enabled if  $ \forall p_j \in \ S_1,\mu^+(p_j)=1$.
				
				\item \textit{Firing Condition}-An enabled transition $t_i \in T_O$ may fire at a marking $\mu=(\mu^+,\mu^-)$ to yield a new marking $\mu_1$ given by :-\\
				$\forall p_j \in t_i^\bullet$
				\begin{itemize}
					\item $ \mu_1^+(p_j)=1$ whenever $a_{ij}^+=1,b_{ij}^+=0$ 
					\item If $a_{ij}^+=1,b_{ij}^+=1$ ,then,$\exists p_k \in t_i^\bullet$ with $\mu_1^+(p_k)=1$ and $\forall  p_l \in t_i^\bullet \backslash \{p_k\}$ either $\mu_1^+(p_l)=1$ or $\mu_1^-(p_l)=1$
				\end{itemize}
			\end{itemize}
			$\forall p_j \in P\backslash t_i^\bullet$
			$$\mu_1^+(p_j)=\mu_0^+(p_j)-a_{ij}^- +a_{ij}^+$$
			$$\mu_1^-(p_j)=\mu_0^-(p_j)-b_{ij}^- +b_{ij}^+$$
				\end{enumerate}
\
	\end{enumerate}
	
\end{defi}

\begin{figure}
	\centering
	\includegraphics[scale=0.35]{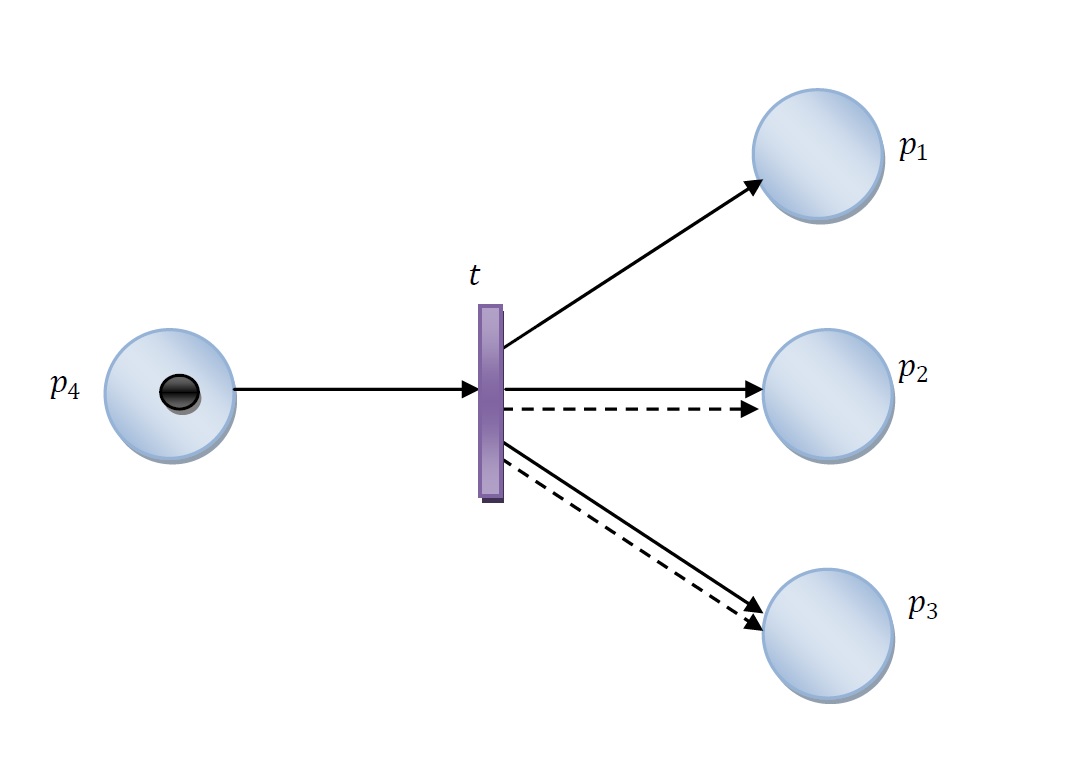}
	\caption{A LSPN with a logic output transition.Transition t is enabled at $\mu=((0,0,0,1),(0,0,0,0))$. t fires to yield a new marking $\mu_1 \ \in \ {((1,1,0,0),(0,0,1,0)),((1,0,1,0),(0,1,0,0)),((1,1,1,0),(0,0,0,0))}$}
\end{figure}

While modeling a workflow process using an LSPN, a positive token in a data place represents that the data has arrived from the organization representing this place, while a negative token represents that the data has not arrived from the organization.So, in a given workflow cycle, all the arrived data is processed while the data which arrives late is processed in the next workflow cycle.

\section{Equivalence of LPN and LSPN}
In order to prove the equivalence between LPN and LSPN, their isomorphism and equivalent definitions are given.
\begin{defi}
	\textbf{Isomorphism}\\
	Let $N'=(P,T,I^-,I^+,I,O,\mu_0)$ be an LPN and $N^*=(P',T',A^-,A^+,B^-,B^+,\mu_0')$ be an LSPN.Let $R(N',\mu_0)$ be the reachability set of $N'$ and $,R^+(N^*,\mu_0')$ be the reachability set of positive markings of  $N^*$.\par 
	Then, $R(N',\mu_0) \ \& \ R^+(N^*,\mu_0')$ are isomorphic if there exists a bijective function $f:R(N',\mu_0) \to R^+(N^*,\mu_0')$such that
	$$\forall \mu_1,\mu_2 \in R(N',\mu_0),t \in T;\mu_1\stackrel{t} \to \mu_2 \implies \exists t' \in T';f(\mu_1)\stackrel{t'} \to f(\mu_2)$$
\end{defi}

\begin{defi}
	\textbf{Equivalence}\\
	Let $N'=(P,T,I^-,I^+,I,O,\mu_0)$ be an LPN and $N^*=(P',T',A^-,A^+,B^-,B^+,\mu_0')$ be an LSPN.Then, $N' \ \& \ N^*$ are equivalent iff $R(N',\mu_0)\ \& \ R^+(N^*,\mu_0')$ are isomorphic to each other.	
\end{defi}

\begin{thm}
	For any Logic Petri Net(LPN) there exists an equivalent Logic Signed Petri Net (LSPN).
\end{thm}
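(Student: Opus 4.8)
The plan is to give a direct construction: from an arbitrary LPN $N'=(P,T,I^-,I^+,I,O,\mu_0)$ I build an LSPN $N^*$ on the same underlying places and transitions and then exhibit the isomorphism explicitly. Concretely, I would set $P'=P$ and $T'=T$, keeping the partition $T=T_D\cup T_I\cup T_O$ unchanged. Every arc incident to a traditional transition $t\in T_D$ is declared a \emph{positive} arc, so that $A^-,A^+$ agree with $I^-,I^+$ on $T_D$ and $B^-=B^+=0$ there; by the execution rules this makes the positive tokens of $N^*$ behave exactly like the tokens of $N'$ at traditional transitions. The real work is in translating the logic data $I$ and $O$ into arc signs: for each $t\in T_I$ I would read off the input expression $f_I(t)$ and split the input places of $t$ into those that must always carry a token (these receive a single positive arc, landing them in the set $S_1$) and those occurring in a disjunctive part of $f_I(t)$ (these receive both a positive and a negative arc, landing them in $S_2$); symmetrically I would encode $f_O(t)$ for $t\in T_O$ through the signs of the output arcs. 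The initial positive marking is set to $(\mu_0')^{+}=\mu_0$, with the negative component chosen to record the ``absent'' places consistently with the intended $S_2$ semantics.

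Next I would define the candidate isomorphism $f:R(N',\mu_0)\to R^+(N^*,\mu_0')$. Since $P'=P$, an LPN marking and an LSPN positive marking live in the same space $\{0,1\}^{|P|}$, so I take $f(\mu)=\mu$, i.e. $f$ sends each reachable marking of $N'$ to the positive marking of $N^*$ that coincides with it. Establishing that $f$ is the required bijection amounts to three claims: that $f$ is well defined (every reachable LPN marking is realized as a reachable positive marking of $N^*$), that it is surjective (the LSPN produces no positive markings beyond the images of LPN markings), and that it transports the firing relation. Injectivity is immediate from $f=\mathrm{id}$, and $f(\mu_0)=(\mu_0')^{+}$ by construction.

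The core of the argument is the transition-by-transition verification of reachability preservation, carried out per class. For $t\in T_D$ the two rule sets coincide by construction, so $\mu_1\stackrel{t}\to\mu_2$ in $N'$ forces the positive marking to fire the same way in $N^*$. For $t\in T_I$ I would prove that ``$f_I(t)$ is true at $\mu$'' is equivalent to the LSPN enabling condition, namely that all places of $S_1$ carry a positive token and that the places of $S_2$ satisfy the ``at least one positive, the rest present'' clause; I then check that firing $t$ updates the positive marking by $\mu_1^{+}(p)=\mu^{+}(p)-a_{ij}^{-}+a_{ij}^{+}$ exactly as the LPN rule updates $\mu$. The case $t\in T_O$ is handled dually, matching the truth of $f_O(t)$ against the choice of output set $S\subseteq t^\bullet$ encoded by the $a_{ij}^{+},b_{ij}^{+}$ pattern. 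Chaining these equivalences along firing sequences yields the isomorphism of the reachability sets, hence the equivalence of $N'$ and $N^*$.

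The main obstacle I anticipate is the faithfulness of the encoding in the first step: the LSPN enabling rule for $T_I$ has a rigid shape (a conjunction over $S_1$ together with a single existential-plus-coverage clause over $S_2$), so I must pin down precisely the normal form of the admissible logic expressions $f_I(t),f_O(t)$ and show that the $S_1/S_2$ arc pattern realizes exactly that Boolean semantics, with the negative-token bookkeeping arranged so that projecting onto the positive marking reproduces the LPN marking. A secondary difficulty is the surjectivity of $f$: I must rule out spurious positive markings created by the extra negative-token dynamics of $N^*$, which I expect to control by an invariant showing that along any firing sequence the positive component stays inside the LPN reachability set.
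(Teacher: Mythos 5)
Your construction is essentially the paper's: it likewise keeps $P'=P$ and the transition partition unchanged, makes every arc at a traditional transition positive so that positive tokens mimic LPN tokens there, uses the logic expression $f_I(t)$ (in the paper, its DNF) to decide which input places of $t\in T_I$ receive a single positive arc and which receive a positive--negative pair (and dually for $f_O(t)$ on output places), places negative tokens on exactly the doubly-arced places that are empty under $\mu_0$, and takes the isomorphism to be the identity on positive markings. The one difference is candour rather than substance: the faithfulness issue you flag as your main obstacle --- that the rigid $S_1$/$S_2$ enabling shape can only realize logic expressions of a particular form, so the Boolean semantics must be matched against the arc pattern explicitly --- is precisely the verification the paper compresses into ``can be easily proved,'' so your plan is the same argument with its open step stated rather than elided.
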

\begin{proof}
	Given an LPN $N'=(P,T,I^-,I^+,I,O,\mu_0)$, we construct an LSPN $ N^*=(P',T',A^-,$\\$A^+,B^-,B^+,\mu_1)$ as follows.
	\begin{enumerate}
		\item The set of places of $N^*$ is same as $N'$ ,i.e. $P'=P$.
		\item Construction of set of transitions $T'$ of $N^*$:
		\begin{enumerate}[(i)]
			\item All the traditional transitions of $N'$ will be the traditional transitions of $N^*$ i.e. if  $t_i \in T_D=T\backslash (T_I\cup T_O) $.Then $t_i \in T'_D \subseteq T'$.This traditional transition $t_i$ in $N^*$ is connected with a place in $N^*$ given by the rule below:
			$$\forall p_j \in P,\ \textnormal{if} \ (p_j,t_i) \in I^- \ \textnormal{then} \ a_{ij}^-=1,b_{ij}^-=0$$
			$$\forall p_k \in P,\ \textnormal{if} \ (p_k,t_i) \in I^+\ \textnormal{then} \  a_{ik}^+=1,b_{ik}^+=0$$
			
			\item All the logic input transitions of $N'$ are logic input transitions of $N^*$ i.e. if $t_i \in T_I =T\backslash (T_D\cup T_O)$.Then $t_i \in T'_I \subseteq T'$ and this transition $t_i \in T_I'$ is connected with a place in $N^*$ by the rule below:\\
			Let ${}^\bullet t_i=\{p_1,p_2,p_3,...,p_k\}$ in $N'$ and $f_I(t_i)$ be the logic input expression associated with $t_i$, convert $f_I(t_i)$ into its disjunctive normal form (DNF), which is unique.If any $p_{j}$ occurs in both the forms i.e. $p_j$ and $\neg p_j$ in the DNF then there exist two arcs in $N^*$ from $p_j$ to $t_i$, one of positive sign and other of negative sign.Therefore, for such $p_j \in {}^\bullet t_i, \ a_{ij}^-=1,b_{ij}^-=1$.
			\\On the other hand, if  $p_{j}$ occurs only in positive form i.e. as $p_j$ and not as $\neg p_j$ in the DNF, then a positive arc is formed from $p_j$ to $t_i $.
			Therefore, for such $p_j \in {}^\bullet t_i, \ a_{ij}^-=1,b_{ij}^-=0$.
			$\forall p_k \in t_i^{\bullet}, \ a_{ik}^+=1 ,b_{ik}^+=0$.

			\item All the logic output transitions of $N'$ are logic output transitions of $N^*$ i.e. if $t_i \in T_O =T\backslash (T_D\cup T_I)$.Then $t_i \in T'_O \subseteq T'$ and this transition $t_i \in T_O'$ is connected with a place in $N^*$ by the rule below:\\
			Let $t_i^\bullet =\{p_1,p_2,p_3,...,p_k\}$ in $N'$ and $f_O(t_i)$ be the logic output expression associated with $t_i$, convert $f_O(t_i)$ into its disjunctive normal form (DNF) which is unique.If any $p_{j}$ occurs in both the forms i.e. $p_j$ and $\neg p_j$ in the DNF then there exist two arcs in $N^*$ from $t_i$ to $p_j$, one of positive sign and other of negative sign.Therefore, for such $p_j \in t_i^\bullet , \ a_{ij}^+=1,b_{ij}^+=1$.
			\\On the other hand, if  $p_{j}$ occurs only in positive form i.e. as $p_j$ and not as $\neg p_j$ in the DNF, then a positive arc is formed from  $t_i $ to $p_j$ .
			Therefore, for such $p_j \in  t_i^\bullet, \ a_{ik}^+=1,b_{ik}^+=0$.\\
			$\forall p_k \in {}^{\bullet}t_i, \ a_{ik}^-=1 ,b_{ik}^-=0$.
		\end{enumerate}
		\item Assignment of tokens 
		\begin{enumerate}
			\item For all $p_j$ satisfying
			$(a_{ij}^-=1$ and $b_{ij}^-=1)$ or $(a_{ij}^+=1$ and $b_{ij}^+=1)$\\
			$\mu_1^+(p_j)=1 \ \& \ \mu_1^-(p_j)=0$ in $N^*$ if $\mu_0(p_j)=1$ in $N'$ and
			
			$\mu_1^+(p_j)=0 \ \& \ \mu_1^-(p_j)=1$ in $N^*$ if $\mu_0(p_j)=0$ in $N'$
			\item For all $p_j$ which has not been assigned a token in step (a) and satisfying either $(a_{ij}^-=1,b_{ij}^-=0)$ or $(a_{ij}^+=1,b_{ij}^+=0)$, \\
			$\mu_1^+(p_j)=1 \ \& \ \mu_1^-(p_j)=0$ in $N^*$ if $\mu_0(p_j)=1$ in $N'$ and\\
			
			$\mu_1^+(p_j)=0 \ \& \ \mu_1^-(p_j)=0$ in $N^*$ if $\mu_0(p_j)=0$ in $N'$
			
		\end{enumerate}
	\end{enumerate}
	%
	The equivalence between $N'$ and $N^*$ can be easily proved, because each marking of $N'$ corresponds to a positive marking in $N^*$ with the same number of positive tokens in the corresponding places in LPN and LSPN (By Construction).That is, in $N'\ \forall \mu_1,\mu_2 \in R(N',\mu_0),t \in T;\mu_1\stackrel{t} \to \mu_2 \implies \exists t' \in T';f(\mu_1)\stackrel{t'} \to f(\mu_2)$
	\par 
	This means that $N'$ and $N^*$ have same behaviour characteristics.Moreover, the structure of $N^*$ is unique since the DNF is unique.So, $f$ is a bijective function and $R(N',\mu_0)\ \& \\ R^+(N^*,\mu_1)$ are isomorphic.
	Consequently, $N'$ and $N^*$ are equivalent.
\end{proof}

\begin{figure}[ht]
	\centering
	\subfloat[LPN with a logic input transition]{{\includegraphics[scale=0.25]{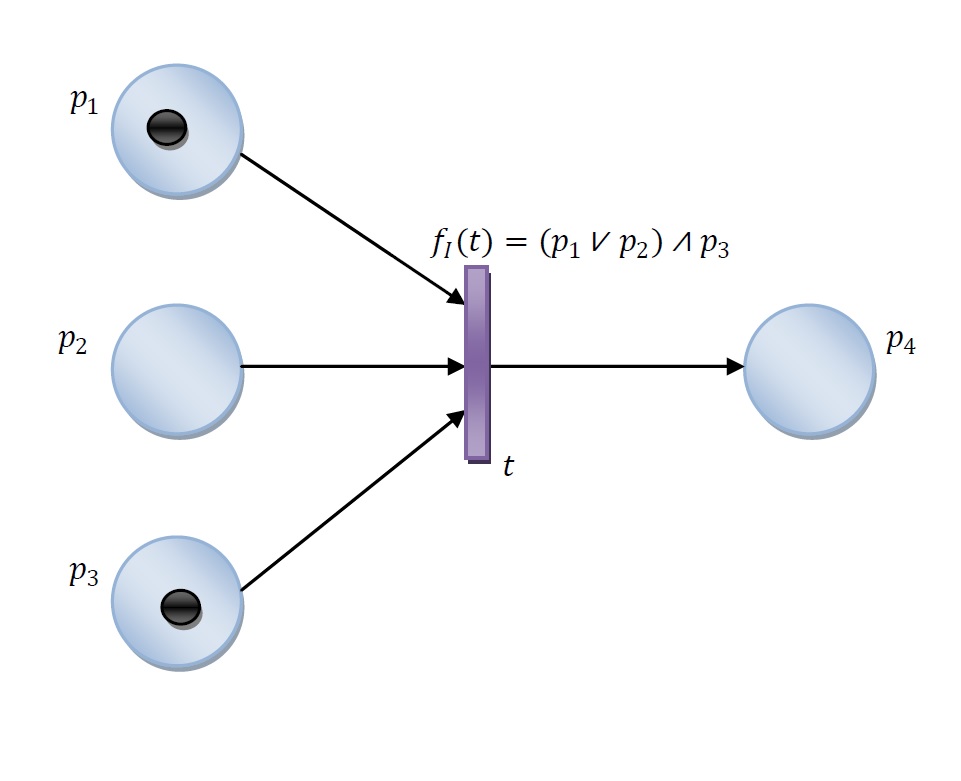} }\label{F1}}
	\qquad \qquad
	\subfloat[Equivalent LSPN for LPN (A)]{{\includegraphics[scale=0.25]{4.jpg} }\label{F12}	}
	\caption{LPN and Equivalent LSPN}
	\label{L1}
\end{figure}
In the figure \ref{L1}, an LPN with a logic input transition is converted to an equivalent LSPN using the procedure in the above theorem.

\begin{figure}[ht]
	\centering
	\subfloat[LPN with a logic output transition]{{\includegraphics[scale=0.25]{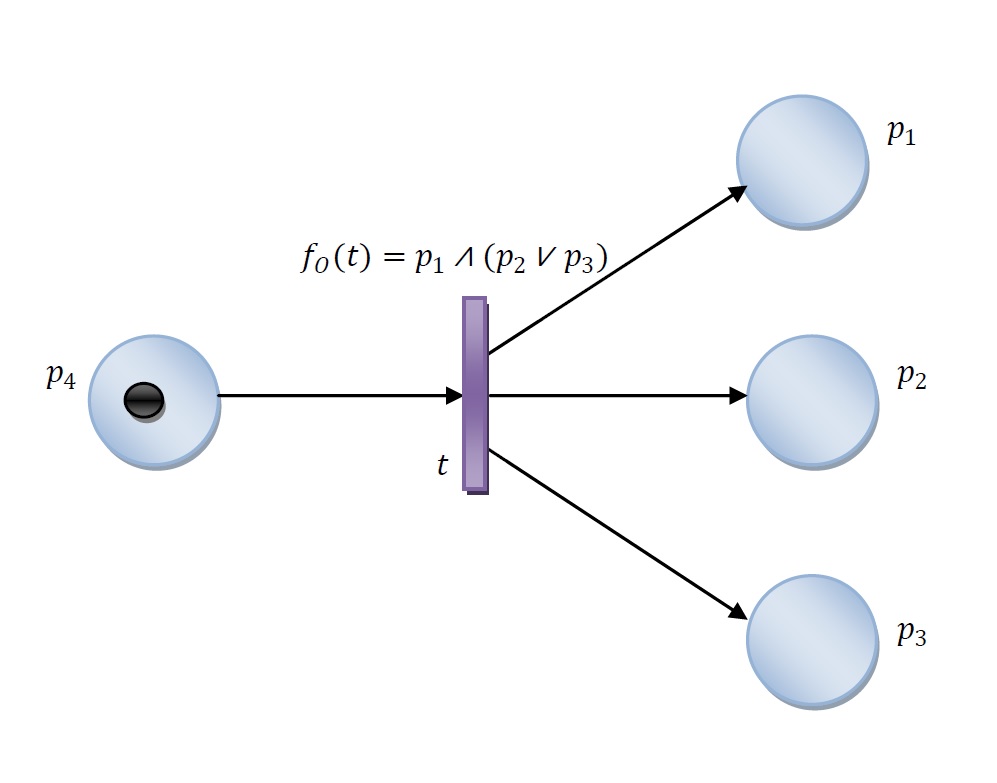} }}
	\qquad   \qquad
	\subfloat[Equivalent LSPN for LPN (A)]{{\includegraphics[scale=0.25]{3.jpg} }}
	\caption{LPN and Equivalent LSPN}
	\label{L2}
\end{figure}
In the figure \ref{L2}, an LPN with a logic output transition is converted to an equivalent LSPN using the procedure in the above theorem.\\

\begin{figure}[!htbp]
	\centering
	\subfloat[AN LPN ]{{\includegraphics[scale=0.3]{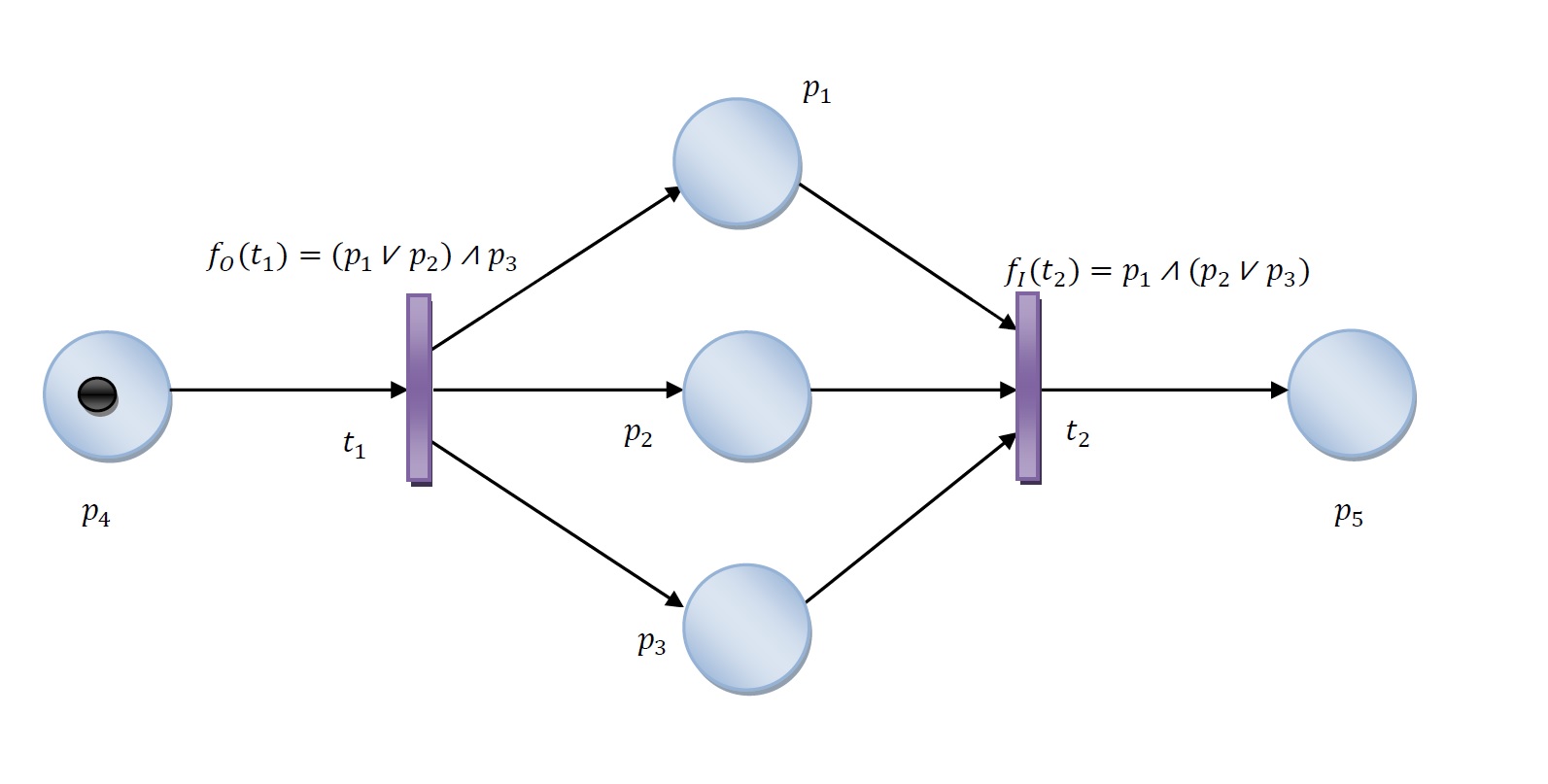} }}
	\qquad \qquad
	\subfloat[Equivalent LSPN ]{{\includegraphics[scale=0.3]{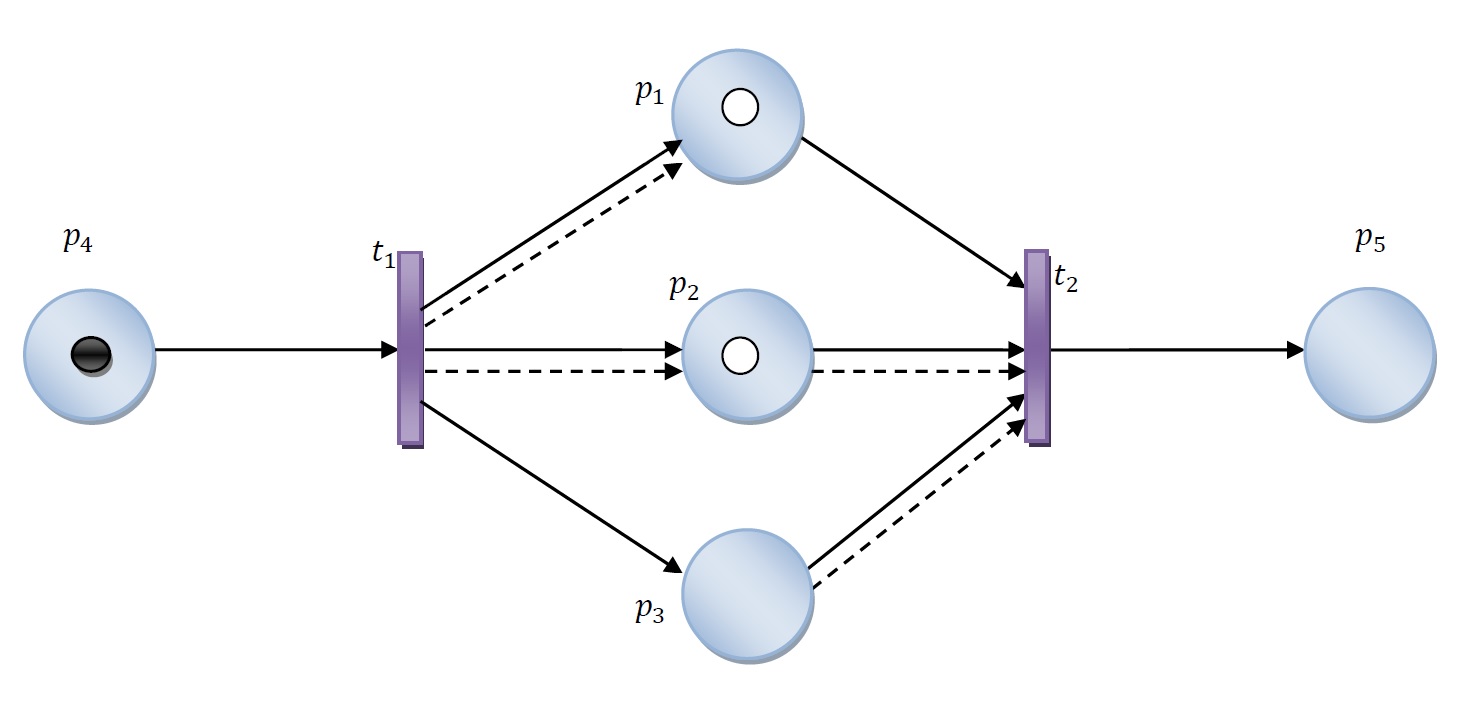} }}

	\caption{Example}
	\label{F2}
\end{figure}
\begin{figure}[h]
	\subfloat[Reachability Tree for LPN]{{\includegraphics[scale=0.35]{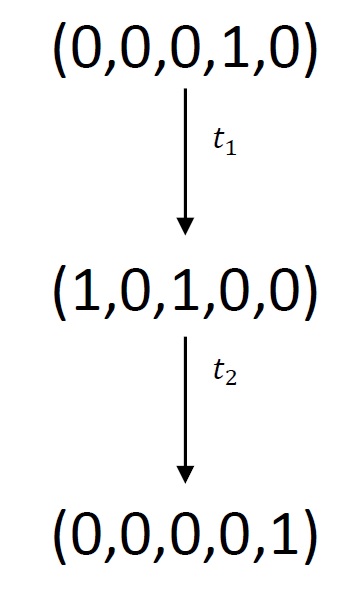} }}
	\qquad   \qquad \qquad \qquad   \qquad    \qquad  
	\subfloat[Reachability Tree for  LSPN]{{\includegraphics[scale=0.35]{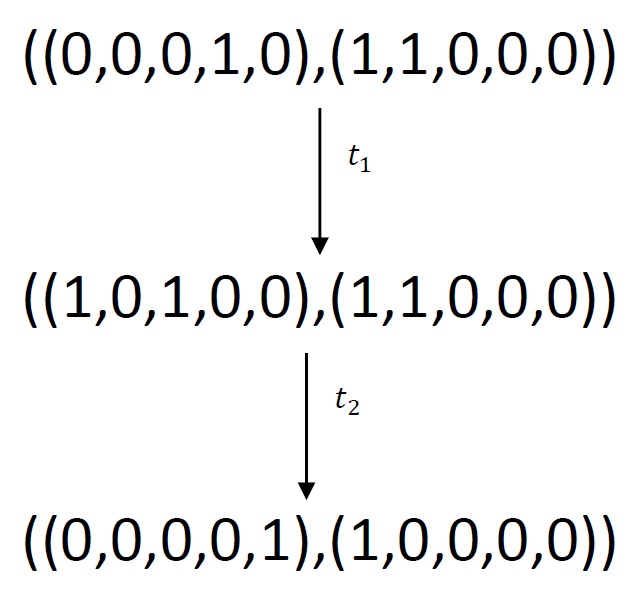} }}
	\qquad
	\caption{Reachability Trees for LSPN Example in Figure \ref{F2}}
	
\end{figure}
In the example in figure \ref{F2} , an LSPN is constructed from a given LPN using the procedure given in theorem above.The reachability tree of LPN and LSPN are also given in figure 7 and it can be clearly seen that corresponding to every marking in LPN we have a positive marking in LSPN.Thus, the reachability set of LPN and the set of positive markings of LSPN are isomorphic which implies that the two PNs are equivalent.
\section{Conclusion}
An LSPN formulated using SPN by mere change of firing rules of the SPN can simulate an LPN.This shows the versatile nature of the SPN introduced by the authors.
%
%
%
%

\bibliographystyle{alpha}
\bibliography{mybibfile}
 \end{document}